	\newcommand{\ncd}{\newcommand}
	\ncd{\mrm}    {\mathrm}
	\ncd{\beq} {\begin{equation}}
	\ncd{\eeq} {\end{equation}}
	\def\d{{\rm d}}
	\newtheorem{prop}{Proposition}
	\newtheorem{corolary}{Corolary}
	\newtheorem{lemma}{Lemma}
\begin{document}

	\title{The conformal metric structure of Geometrothermodynamics}

\author{Alessandro Bravetti$^{1,2}$, Cesar S. Lopez-Monsalvo$^2$, Francisco Nettel$^3$ and Hernando Quevedo$^{1,2,4}$}

\affiliation{$^1$Dipartimento di Fisica and ICRA, Universit\`a di Roma La Sapienza, Piazzale Aldo Moro 5, I-00185 Rome, Italy\\
$^2$Instituto de Ciencias Nucleares, Universidad Nacional Aut\'onoma de M\'exico, AP 70543, M\'exico DF 04510, Mexico \\
$^3$ Departamento de F\'\i sica, Facultad de Ciencias, 
Universidad Nacional Aut\'onoma de M\'exico,
AP 50542, M\'exico DF 04510, Mexico\\
$^4$Instituto de Cosmologia, Relatividade e Astrofisica ICRA - CBPF\\
Rua Dr. Xavier Sigaud, 150, CEP 22290-180, Rio de Janeiro, Brazil
}

\email{bravetti@icranet.org,cesar.slm@correo.nucleares.unam.mx,fnettel@ciencias.unam.mx,quevedo@nucleares.unam.mx}

\date{\today}

	\begin{abstract}
	We present a thorough analysis on the invariance of the most widely used metrics in the Geometrothermodynamics (GTD) programme. We centre our attention 
	in the invariance of the curvature of the space of equilibrium states under a change of fundamental representation.
Assuming that the systems under consideration can be described by a fundamental relation which is a homogeneous function of a definite order, we demonstrate that  such invariance is only compatible with total Legendre transformations in the present form of the programme. We give the explicit form of a metric which is invariant under total Legendre transformations and whose induced metric produces a curvature which is independent of the fundamental representation. Finally, we study a generic system with two degrees of freedom and whose fundamental relation is homogeneous of order one. 	
	\end{abstract}

\maketitle


\section{Introduction}
The use of geometrical methods in theoretical physics has proven to be remarkably fruitful. The lessons we have learnt from the geometric description of field theories, and in particular from Einstein's theory of general relativity, have taught us about the central role played by curvature in describing the interaction of the fields. It is with such a spirit that we wish to review here some of the recent advances in the formulation of a geometric theory of thermodynamics or, as coined by Quevedo \cite{QuevedoGTD}, Geometrothermodynamics (GTD). In this formalism, one would like to extrapolate these ideas and make a statement of the form
	\begin{equation*}
	\label{definteraction}
	 \text{thermodynamic interaction} \approx  \text{curvature}  .
	\end{equation*}
Moreover, exploring the symmetries of a given system is an extremely powerful tool in finding its
dynamical equations of motion and allows us to explore their solutions in a clearer manner.

From the early work of Gibbs and Charatheodory \cite{gibbs, cara} to the most recent  theories based upon the use of Hessian 
metrics, the challenge of finding an intrinsic geometric formulation of thermodynamics has been pursued for almost a century. 
Hessian metrics were first applied  in thermodynamics by Rao \cite{rao45}, in 1945, using the entropy as thermodynamic potential. 
Rao's original work has been followed up and extended by a number of
authors (see, e.g., \cite{amari85} for a review).
In the latter metric efforts, Legendre invariance has not been treated with due care.  Legendre transformations (LT) only account for an interchange between conjugate pairs of thermodynamic variables once a representation has been chosen (e.g. the internal energy). This leads us  to different but equivalent descriptions through the distinct thermodynamic potentials (i.e. the enthalpy and the Helmholtz and Gibbs free energies). Just as in general relativity the physical reality cannot depend on a particular choice of coordinates, thermodynamics should be independent of the potential one uses to describe a given system. Therefore,  Legendre invariance should be an essential ingredient of any geometric construction of thermodynamics.
 
The original approach of Gibbs and  Caratheodory indicates us that the thermodynamic phase space  is endowed with a contact structure, which encodes the first law of thermodynamics (c.f. section \ref{thermo}). The GTD programme aims to promote the contact geometry of the phase space into a Riemannian contact manifold such that the metric is invariant under LT. This automatically translates into a Legendre invariant metric description in its maximal integral sub-manifold, which we identify as the space of equilibrium states. In addition, thermodynamics should also be independent of the representation one employs to describe a system, i.e. one should be able to work with the internal energy or  entropy representations, indistinctively. This can be understood as an additional symmetry  that should be incorporated in order to obtain a completely invariant geometric theory of thermodynamics.
 
The invariance under a change of representation has never been considered within the GTD formalism.  The main aim of this work is to address this issue through the construction of a metric guided by its underlying symmetries. We show that the outcome of this process leads to a particular form of one of the previously known families of metrics in GTD. Thus, obtaining a consistent result from a different point of view. 

The paper is organised as follows. In section \ref{thermo}, we make a brief review of the geometric structure of thermodynamics. In section \ref{GTD}, we present the standard results known from the GTD programme.  In section \ref{invariance}, we analyse  the behaviour of the thermodynamic metrics under a change of representation. Finally, in section \ref{conclusions}, we write the conclusions and further perspectives.

\section{Geometric structure of thermodynamics.}
\label{thermo}

In standard equilibrium thermodynamics \cite{Callen}, a system with $n$ degrees of freedom is fully specified by $n$ extensive variables $E^a$ together with their corresponding conjugate intensive variables $I_a$ and a thermodynamic potential $\Phi$ relating them. To geometrise such a system, one considers two elements:
	\begin{enumerate}
		\item A $2n+1$ dimensional  manifold $\mathcal{T}$, endowed with a contact structure $\xi\subset T\mathcal{T}$, that is, a maximally non-integrable family of hyperplanes  
	\beq
	\label{gtd.hyper}
	\xi=\ker(\Theta) 
	\eeq	
defined through some 1-form $\Theta$ satisfying the non-integrability condition
	\beq
	\label{gtd.ni}
	\Theta \wedge (\d \Theta)^n \neq 0,
	\eeq
and 
	\item the \emph{Legendre embedding}
	\beq
	\label{gtd.emb}
	\varphi:\mathcal{E}\longrightarrow\mathcal{T},
	\eeq
where $\mathcal{E}\subset\mathcal{T}$ is the $n$-dimensional integral sub-manifold defined by the \emph{isotropic condition}
	\beq
	\label{gtd.01}
	\varphi^*(\Theta) = 0.
	\eeq
	\end{enumerate} 
We call the  contact manifold $\mathcal{T}$ and the Legendre sub-manifold  $\mathcal{E}$ the \emph{thermodynamic phase space} and  the \emph{space of equilibrium states}, respectively. The reason for these names follows from the isotropic condition, equation \eqref{gtd.01},  which can be interpreted as a the geometric statement of the first law of thermodynamics. To see  this, note that one can always find a set of local coordinates $Z^A = (\Phi, I_1, ..., I_n,E^1, ..., E^n )$ for $\mathcal{T}$, such that the  contact $1-$form $\Theta$ is written as
	\beq
	\label{gtd.local}
	\Theta = \d \Phi - I_a \d E^a \quad \text{(Darboux theorem)},
	\eeq 
where we will use Einstein's summation convention unless explicitly stated otherwise, capital indices range from $0$ to $2n$ and lower case indices run from $1$ to $n$. In this coordinate system, the embedding \eqref{gtd.emb} takes the form
	\beq
	\varphi(E^a) = (\Phi(E^a), E^a, I_a),
	\eeq
and equation \eqref{gtd.01} simply becomes
	\beq
	\label{gtd.02}
	\varphi^*(\Theta) = \varphi^*\left(\d \Phi - I_a \d E^a\right) = \left(\frac{\partial \Phi}{\partial E^a} - I_a\right)\d E^a  = 0.
	\eeq
It follows immediately that
	\beq
	\label{gtd.03}
	 \frac{\partial \Phi}{\partial E^a} = I_a.
	\eeq

Equations \eqref{gtd.02} and \eqref{gtd.03} constitute the standard Gibbs relation of equilibrium thermodynamics in $\mathcal{E}$, i.e., 
	\beq
	\label{firstlaw}
	\d \Phi = I_a \d E^a,
	\eeq
where $\Phi$ represents the \emph{thermodynamic fundamental relation} for the set of state variables $E^a$. We call the contact 1-form $\Theta$ expressed in the coordinates $Z^A$ the \emph{Gibbs 1-form}.

Note that the contact 1-form $\Theta$ is not unique. Indeed, any other 1-form defining the same family of hyperplanes, equation  \eqref{gtd.hyper}, is necessarily  conformally equivalent to $\Theta$. Thus, a contact structure is an \emph{equivalence class} $[\Theta]$ of 1-forms satisfying \eqref{gtd.ni} related by a conformal transformation, i.e. for any two 1-forms $\Theta_1$ and $\Theta_2$ in $[\Theta]$
	\beq
	\label{gtd.scale}
	\Theta_1 \sim \Theta_2 \quad \text{iff} \quad \Theta_2 = \Omega \Theta_1
	\eeq
for some real function $\Omega$. We further require that the contact structure is oriented, thus $\Omega$ must have a definite sign. As we will shortly show, each member of the class corresponds to a different thermodynamic representation (c.f. section \ref{CRsec}, below). To this end, let us consider the class of maps which leaves invariant the contact structure. Let $f:\mathcal{T} \longrightarrow \mathcal{T}$ be a diffeomorphism of the thermodynamic phase-space. If $f$ preserves the contact structure, i.e
	\beq
	f^*(\Theta) = \Omega(Z^A) \Theta = \Theta' \quad \text{where} \quad \Omega\neq 0,
	\eeq
we call $f$ a \emph{contactomorphism} \cite{banyaga}. Note that $\Theta'\in[\Theta]$.

\subsection{Legendre transformations}
\label{lenendresec}

 A Legendre transformation corresponds to a redefinition of the thermodynamic potential by exchanging the role played by conjugate pairs of extensive and intensive variables. Note that such a role is physically relevant while working on the space of equilibrium states $\mathcal{E}$, whilst it is  a  mere change of coordinates on the phase space $\mathcal{T}$. These transformations are defined through the relations
	\begin{align}
	\label{LT.1}
	\tilde \Phi_{(i)} 	&\equiv  \Phi - I_{(i)} E^{(i)} \quad (\text{no sum over }i),\\
	\tilde I_{(i)} 	&\equiv  E^{(i)} \quad  \text{and}\\
	\label{LT.2}	
	\tilde E^{(i)}	&\equiv  -I_{(i)},
	\end{align}
while leaving the rest of the coordinates unchanged, i.e. $\tilde I_j = I_j$ and $\tilde E^j= E^j$ for $j\neq i$. 

The transformation \eqref{LT.1} - \eqref{LT.2} is called a \emph{partial Legendre transformation} (PLT) since it only interchanges the $i$th pair of thermodynamic variables. Thus, the transformation that exchanges every pair of coordinates called the \emph{total Legendre transformation} (TLT).

Note that Legendre transformations belong to the special class of contactomorphisms which leave the representative unchanged, i.e. $f^*(\Theta) = \Theta$. 

\subsection{Change of representation}
\label{CRsec}

In general, we are also interested in obtaining a description of a single thermodynamic system in terms of a different potential, which does not correspond to a Legendre transformation in the prescribed sense. In this case, we are truly looking for a different fundamental representation. This entails us choosing a different 1-form in the class defining the contact structure of phase space [see \eqref{gtd.scale}], i.e. a contactomorphism $f^*(\Theta) = \Omega(Z^A) \Theta$ such that $\Omega(Z^A) \neq 1$.

There is a particular set of contactomorphisms which is of thermodynamic relevance. That is, those defined by interchanging the thermodynamic potential with one of the extensive variables. If we express $\Theta$ in Darboux coordinates [equation \eqref{gtd.local}], the transformation exchanging $\Phi$ with the $i$th extensive variable is 
	\beq
	\label{newrep}
	f_{(i)}^*(\Theta) = \Theta_{(i)} = -\frac{1}{I_{(i)}}\Theta = \d E^{(i)} - \frac{1}{I_{(i)}}\d \Phi + \sum_{j\neq i}\frac{I_j}{I_{(i)}}\d E^j,
	\eeq
and using Darboux's theorem, we can write $\Theta_{(i)}$ in its canonical form
	\beq
	\Theta_{(i)} = \d \Phi' - I_{a'} \d E^{a'},
	\eeq
from which we can read the change of coordinates as
	\beq
	\label{CR2}
	\Phi' = E^{(i)}, \quad 
	E^{(i)'}  =  \Phi, \quad
	E^{j'} =  E^j, \quad
	I_{(i)'}  =  \frac{1}{I_{(i)}} \quad \text{and} \quad
	I_{j'} =  -\frac{I_j}{I_{(i)}}.
	\eeq

It is clear that the isotropic condition - equation \eqref{gtd.01} - defining the first law, is invariant under the change of scale 
	\beq
	\varphi^*\left(-\frac{1}{I_{(i)}}\Theta\right) = \varphi^*\left(-\frac{1}{I_{(i)}}\right)\varphi^*(\Theta) = 0.
	\eeq
Therefore, we obtain the equilibrium relations in the new coordinates
	\beq
	\label{CR.newfund}
	E^{(i)} = E^{(i)}(\Phi, E^j), \quad \frac{\partial E^{(i)}}{\partial \Phi} = \frac{1}{I_{(i)}} \quad \text{and} \quad \frac{\partial E^{(i)}}{\partial E^j} = -\frac{I_j}{I_{(i)}} \quad (j\neq i),	
	\eeq
and the first law in the $E^{(i)}$ representation is simply
	\beq
	\d E^{(i)} = \frac{1}{I_{(i)}}\d\Phi - \sum_{j\neq i}\frac{I_j}{I_{(i)}}\d E^j.
	\eeq

In sum, the symmetries we will demand in the forthcoming sections are motivated by the fact that the contact 1-form $\Theta$ is invariant under Legendre transformations, whilst a change of representation corresponds to selecting a different 1-form in the class \eqref{gtd.scale} defining the same contact structure. Note that these `symmetries' leave invariant the space of equilibrium states $\mathcal{E}$ and they will be automatically inherited by its geometric properties.


\section{The Riemannian structure of the GTD programme}
\label{GTD}

In addition to the geometric description of thermodynamics in terms of a contact structure, the GTD programme promotes the contact manifold $(\mathcal{T},[\Theta])$ into a Riemannian contact manifold $(\mathcal{T},[\Theta],G)$, where $G$ is a metric sharing the symmetries of $\Theta$. The class of metrics satisfying this requirement is vast and there is currently no general principle to select a particular one. The way to deal with this ambiguity has been to introduce some physical input from known systems and demanding that the curvature associated with the induced metric in the space of equilibrium states accounts for the expected phenomena, i.e. to be zero in the case of the ideal gas, or to diverge as one approaches a phase transition, taking us away from local equilibrium hypothesis.  

Thus far, there are two independent families of metrics for $\mathcal{T}$ which can be classified according to their invariance properties \cite{fundamentals}, that is, those which are invariant under total Legendre transformations only, and those which are also invariant under partial Legendre transformations. Let us write them as
	\beq
	\label{quevedoGII}
	G_{\rm T}  = \Theta \otimes \Theta + \Lambda(Z^A)\left (\xi^a_{\ b}E^b I_a \right) \left(\chi^c_{\ d}\ \d E^d \otimes \d I_c \right)
	\eeq
and
	\beq
	\label{quevedoGIII}
	G_{\rm P}  =\Theta \otimes \Theta +\Lambda\left(Z^A\right) \sum_{i=1}^n \left[\left(E^i I_i \right)^{2k+1} \d E^i \otimes \d I_i \right],
	\eeq
where $\Lambda(Z^A)$ is an arbitrary Legendre invariant function of the coordinates $Z^A$, $k$ is an integer and $\xi^a_{\ b}$ and $\chi^a_{\ b}$ are diagonal constant matrices. Note that these matrices are not tensors. Their purpose is solely to indicate the form of the metrics. In a previous work \cite{phasetransitions}, the specific form has been determined through the correct description of the relevant physical phenomena. On the one hand, the choice $\xi^a_{\ b} = \delta^a_{\ b}$ and $\chi^a_{\ b} = \delta^a_{\ b}$, has been used to describe systems with first order phase transitions. On the other hand, second order phase transitions in black holes have been correctly described when $\xi^a_{\ b} = \delta^a_{\ b}$ and $\chi^a_{\ b} = \eta^a_{\ b}$, where $\eta^a_{\ b} = \text{diag}[-1,1,\ldots,1]$. These metrics are known in the GTD literature as $G_I$ and $G_{II}$, respectively. Finally, we use the labels T and P to denote invariance under total and partial Legendre transformations. 

The corresponding induced metrics in the space of equilibrium states are simply given by
	\begin{equation}
		\label{quevedogI}
	g_{\rm T} = \varphi^*(G_{\rm T}) = \Lambda \left(\xi^a_{\ b} E^b \frac{\partial \Phi}{\partial E^a} \right) \chi^c_{\ d} \frac{\partial^2 \Phi}{\partial E^c \partial E^e}\ \d E^d \otimes \d E^e ,
	\end{equation}
and
	\begin{equation}
	\label{quevedogIII}
	g_{\rm P}=\varphi^*(G_{P}) = \Lambda \sum_{i,j=1}^n\left[\left(E^i \frac{\partial \Phi}{\partial E^i} \right)^{2 k + 1}  \frac{\partial^2 \Phi}{\partial E^i \partial E^j}\ \d E^i \otimes \d E^j\right].
	\end{equation}

The core idea in the GTD programme is that the curvature associated with either \eqref{quevedogI} or \eqref{quevedogIII} contains all the information about  the `thermodynamic interaction' of a system specified by its fundamental relation $\Phi(E^a)$.  For example, the lack of thermal interaction of the ideal gas is reflected by  the vanishing of  its associated curvature scalar. Similarly, this approach has proven to describe accurately the critical behaviour of various systems as curvature singularities, i.e. the configurations where the local equilibrium hypothesis is no longer valid
such as, e.g., the van der Waals gas \cite{phasetransitions}.

Despite the fact that both families of metrics for $\mathcal{E}$, equations \eqref{quevedogI} and \eqref{quevedogIII},  are induced from the manifestly Legendre invariant metrics on $\mathcal{T}$, equations \eqref{quevedoGII} and \eqref{quevedoGIII}, in general,  do not produce the same curvature for $\mathcal{E}$ when one changes from one fundamental representation to another (c.f. section \ref{CRsec}, above). The physical outcome of choosing a different member of the class $[\Theta]$ generating the contact structure of $\mathcal{T}$ should leave the geometric properties of $\mathcal{E}$ unchanged. Therefore, we will demand the CR invariance of the programme through the isometry of the metric on $\mathcal E$.


\section{The change of representation in GTD}
\label{invariance}

The change of representation has remained a largely unanalysed issue in previous work on GTD. In this section, we consider the change of representation as described in section \ref{CRsec} and we analyse how the induced metrics behave under such transformation. To this end, let us note the following points:
	\begin{enumerate}
	\item Our analysis will only consider systems which are described by homogeneous functions of a definite order.
	\item If we have a homogeneous fundamental relation $\Phi(\lambda E^a) = \lambda^\beta \Phi(E^a)$, the change of representation $E^{(i)} = E^{(i)}(\Phi,E^j)$ with $j\neq i$ is not a homogeneous function.
	\item The representation in which the system is described by a homogeneous function will be called the \emph{canonical representation} and we will label it by $\Phi$.
	\end{enumerate}
 In this sense, the phase-space metrics, \eqref{quevedoGII} and \eqref{quevedoGIII}, together with their corresponding induced metrics \eqref{quevedogI} and \eqref{quevedogIII},  are written in the canonical representation.
 
 Consider a  slight generalization of the  metric $G_{\rm T}$,   
 	\begin{align}
	\label{cesarGII}
	G^\Phi &= \Theta \otimes \Theta + \left (\xi^a_{\ b}E^b I_a \right)\  \sum_{k,d}  \Lambda_k(Z^A) \left(\chi^k_{\ d}\ \d E^d \otimes \d I_k \right),
	\end{align}
where $\xi^a_{\ b}$ and $\chi^a_{\ b}$ are as in \eqref{quevedoGII}. Choosing the different representative $\Theta_{(i)}$ [c.f. equation \eqref{newrep}], we can rewrite $G^\Phi$ as
	\begin{align}
	\label{cesarGIIb}
	G^{E^{(i)}} &= \Theta_{(i)} \otimes \Theta_{(i)} + \left (\xi^{a'}_{\ b'}E^{b'} I_{a'} \right)\  \sum_{k',d'}  \Lambda_{k'}(Z^{A'}) \left(\chi^{k'}_{\ d'}\ \d E^{d'} \otimes \d I_{k'} \right),
	\end{align}
which, using equations (\ref{CR2}), can be related to the un-primed coordinates as
	\begin{align}
	\label{quevedoGIIi}
	G^{E^{(i)}} &= \frac{1}{I_{(i)}^{\ \ 2}} \Theta \otimes \Theta + \left[\xi^{(i)}_{\ (i)} \frac{\Phi}{I_{i}} - \sum_{j\neq i} \xi^j_{\ j} \frac{E^j I_j}{I_{(i)}}  \right] \left[\Lambda_{(i)}\chi^{(i)}_{\ (i)} \d \Phi \otimes \d\left(\frac{1}{I_{(i)}}\right) + \sum_{j\neq i}\Lambda_j \chi^j_{\ j} \d E^j \otimes \d \left(-\frac{I_j}{I_{(i)}} \right) \right].
	\end{align}
There is an implicit change in the $\Lambda$-functions under the prescribed coordinate transformation, namely
		\beq
		\Lambda_{(i)} =\Lambda_{(i)}\left[Z^{A'}\left(Z^A\right)\right] \quad \text{and} \quad \Lambda_{j} = \Lambda_j\left[Z^{A'}\left(Z^A\right)\right]. 
		\eeq

Now we will prove the following
	\begin{lemma}
	If $\Lambda_k$ is a Legendre invariant function for all $k$, then $G^\Phi$ is invariant under TLTs.
	\end{lemma}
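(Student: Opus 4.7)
The plan is to exploit the fact that under a total Legendre transformation the three non-trivial building blocks of $G^\Phi$ each pick up controlled factors whose product equals the identity. Concretely, TLT is obtained by iterating equations (\ref{LT.1})--(\ref{LT.2}) over every conjugate pair, giving the substitution $\Phi\mapsto\Phi-E^aI_a$, $E^a\mapsto -I_a$, $I_a\mapsto E^a$. I would organise the verification factor by factor.

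First I would recall (as noted at the end of Section \ref{lenendresec}) that $\Theta\mapsto\Theta$ under any Legendre transformation, so $\Theta\otimes\Theta$ is trivially invariant; this disposes of the first summand in $G^\Phi$. Next I would rewrite the second summand using the fact that $\xi^a_{\ b}$ and $\chi^a_{\ b}$ are diagonal constant matrices, so that
\begin{equation*}
G^\Phi-\Theta\otimes\Theta \;=\; \Bigl(\textstyle\sum_a \xi^a_{\ a} E^a I_a\Bigr)\sum_k \Lambda_k(Z^A)\,\chi^k_{\ k}\, \d E^k\otimes \d I_k .
\end{equation*}
Under TLT the coefficient transforms as
\begin{equation*}
\sum_a \xi^a_{\ a}E^a I_a \;\longmapsto\; \sum_a \xi^a_{\ a}(-I_a)(E^a) \;=\; -\sum_a \xi^a_{\ a} E^a I_a ,
\end{equation*}
so a factor $-1$ appears. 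For the tensorial piece, again using that the metric is understood as a symmetric $(0,2)$-tensor so that $\d E^k\otimes \d I_k$ stands for its symmetric part,
\begin{equation*}
\d E^k\otimes \d I_k \;\longmapsto\; \d(-I_k)\otimes \d(E^k) \;=\; -\,\d I_k \otimes \d E^k \;=\; -\,\d E^k\otimes \d I_k ,
\end{equation*}
producing a second factor $-1$. The two signs cancel, and since by hypothesis each $\Lambda_k$ is a scalar that is unchanged by TLT, the entire second summand is preserved.

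Putting the three pieces together yields $f^{*}_{\mathrm{TLT}}(G^\Phi)=G^\Phi$, which is the claim. I expect no real obstacle here beyond bookkeeping; the one subtle point worth stating explicitly is the symmetrisation convention for $\d E^k\otimes \d I_k$, because without it the factor from the differential block would read $-\d I_k\otimes\d E^k$ rather than $-\d E^k\otimes\d I_k$ and the invariance would be obscured. It is also worth noting that the argument relies on TLT \emph{and} on $\xi$, $\chi$ being diagonal; for a partial Legendre transformation, only one pair flips while the remaining terms in the sum $\sum_a \xi^a_{\ a}E^aI_a$ stay put, so the sign-matching mechanism breaks down and one would not obtain invariance, consistent with the class to which $G^\Phi$ is expected to belong.
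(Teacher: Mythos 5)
Your proof is correct. The paper disposes of this lemma in a single line by citing the already-established TLT-invariance of $G_{\rm T}$, which holds term by term once each $\Lambda_k$ is separately invariant; your computation is essentially that citation made explicit: $f^*\Theta=\Theta$, the factor $-1$ from $\xi^a_{\ b}E^bI_a\mapsto-\xi^a_{\ b}E^bI_a$ cancelling the factor $-1$ from $\d E^k\otimes\d I_k\mapsto-\d E^k\otimes\d I_k$ (under the symmetric-tensor reading of $\otimes$), with the hypothesis on $\Lambda_k$ absorbing the scalar coefficients. What the explicit bookkeeping buys is a self-contained verification that does not lean on the prior result for $G_{\rm T}$, plus the useful observation about the symmetrisation convention, which the paper leaves implicit; your closing remark that the sign-matching fails for a partial Legendre transformation is also consistent with the paper's placement of $G^\Phi$ in the family invariant under total transformations only.
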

	\begin{proof}
	It follows from the invariance of $G_{\rm T}$ provided each of the $\Lambda_k$ is itself invariant.
	\end{proof}

	\begin{prop}
	Let the fundamental relation $\Phi = \Phi(E^a)$ be a homogeneous function of order $\beta$.	Then, the induced metrics $g^\Phi = \varphi^*(G^\Phi)$ and $g^{E^{(i)}} = \varphi^*(G^{E^{(i)}})$ are conformally related if and only if $\Lambda_{(i)} = \Lambda_j \ \chi^j_{\ j}\left(\chi^{(i)}_{\ (i)}\right)^{-1}$  (no sum over $j$) for all $j\neq i$.
	\end{prop}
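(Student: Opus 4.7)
The plan is to write both induced metrics explicitly in the canonical coordinates $\{E^a\}$ on $\mathcal{E}$, use Euler's theorem to collapse the scalar prefactors, and then compare the two tensors block by block.

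First I would apply $\varphi^*$ to $G^\Phi$. Since $\varphi^*\Theta = 0$, the $\Theta\otimes\Theta$ term drops out, and substituting $I_a = \partial\Phi/\partial E^a$ so that $\d I_k = \Phi_{,ke}\,\d E^e$, together with the diagonal form of $\chi^a_{\ b}$, gives
\begin{equation*}
g^\Phi = \bigl(\xi^a_{\ b} E^b \Phi_{,a}\bigr) \sum_{k,e} \Lambda_k \chi^k_{\ k}\, \Phi_{,ke}\, \d E^k \otimes \d E^e.
\end{equation*}
By Euler's theorem applied to a homogeneous $\Phi$ of order $\beta$, the scalar prefactor reduces (for the physically relevant choice $\xi^a_{\ b} = \delta^a_{\ b}$) to a multiple of $\beta\Phi$.

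Next I would pull back $G^{E^{(i)}}$ through the same embedding. Once again $\varphi^*\Theta = 0$ removes the $\Theta\otimes\Theta/I_{(i)}^{\,2}$ piece. For the remaining term I would expand
\begin{equation*}
\d\Phi = \Phi_{,a}\,\d E^a,\quad \d\!\left(\frac{1}{I_{(i)}}\right) = -\frac{\Phi_{,(i)e}}{I_{(i)}^{\,2}}\,\d E^e,\quad \d\!\left(-\frac{I_j}{I_{(i)}}\right) = -\frac{\Phi_{,je}}{I_{(i)}}\,\d E^e + \frac{\Phi_{,j}\,\Phi_{,(i)e}}{I_{(i)}^{\,2}}\,\d E^e,
\end{equation*}
and again use Euler's theorem $E^a \Phi_{,a} = \beta\Phi$ to simplify the scalar prefactor appearing in square brackets in the formula for $G^{E^{(i)}}$.

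At this point both $g^\Phi$ and $g^{E^{(i)}}$ are realised as symmetric $(0,2)$-tensors on $\mathcal{E}$ in the common chart $\{E^a\}$, each organised into ``blocks'' labelled by $\Lambda_{(i)}\chi^{(i)}_{\ (i)}$ and by $\Lambda_j \chi^j_{\ j}$ for $j \neq i$. Conformality demands $g^{E^{(i)}} = \Omega\, g^\Phi$ for a single scalar $\Omega$. Because the Hessian $\Phi_{,ke}$ of a generic homogeneous $\Phi$ is non-degenerate, the distinct blocks cannot cancel against one another, so matching them against the uniformly-weighted structure of $g^\Phi$ forces the combination $\Lambda_k \chi^k_{\ k}$ to take the same value for $k = (i)$ and $k = j$, which is exactly $\Lambda_{(i)} = \Lambda_j\,\chi^j_{\ j}\bigl(\chi^{(i)}_{\ (i)}\bigr)^{-1}$.

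For the converse, under the assumption $\Lambda_{(i)}\chi^{(i)}_{\ (i)} = \Lambda_j\chi^j_{\ j} \equiv \lambda$, this common value factors out of the entire expansion of $g^{E^{(i)}}$, and collecting the remaining pieces via Euler's identity yields $g^{E^{(i)}} = \Omega(E)\, g^\Phi$ with an explicit conformal factor $\Omega$ built from $I_{(i)}$, $\Phi$ and $\beta$. The main obstacle is the ``only if'' direction: ruling out accidental cancellations between the $(i)$-block and the $j$-blocks, for which the non-degeneracy of the Hessian of a generic homogeneous $\Phi$ is essential.
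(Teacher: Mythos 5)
Your proposal follows essentially the same route as the paper's proof: pull back both metrics, use Euler's identity $\beta\Phi = I_aE^a$ together with the first law to re-express $g^{E^{(i)}}$ in the unprimed chart, and observe that it splits into a piece proportional to $g^\Phi$ plus cross-terms weighted by $\Lambda_j\chi^j_{\ j}-\Lambda_{(i)}\chi^{(i)}_{\ (i)}$, whose vanishing is exactly the stated condition \eqref{cond1}. One caveat: your appeal to non-degeneracy of the Hessian is not quite right for $\beta=1$ (Euler's relation forces $\Phi_{,ab}E^b=0$ in that case), but the paper's own handling of the ``only if'' direction is no more detailed, so this does not constitute a gap relative to the published argument.
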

	\begin{proof}
	
	The induced metric $g^\Phi = \varphi^*(G^\Phi)$ can formally be written as 
	\beq  
	\label{gIPhi}
 	g^{\Phi} =  (\xi^{a}_{\ b}  I_a  E^b )\  \sum_k \Lambda_k \, \chi^{k}_{\ c}\  \d I_k \otimes  \d E^c,
	\eeq
where $I_a=\partial \Phi/\partial E^a$ and therefore 
	\beq \label{dI}
    \d I_k = \frac{\partial^2 \Phi}{\partial E^k \partial E^b} \ \d E^b.
	\eeq

Now, the induced metric in the $E^{(i)}$ representation, $g^{E^{(i)}}=\varphi^*(G^{E^{(i)}})$ is
	\begin{align}
	\label{quevpullback2}
	g^{E^{(i)}} & = \left[\xi^{(i)}_{\ (i)} \frac{\Phi}{I_{(i)}} - \sum_{j\neq i} \xi^j_{\ j} \frac{E^j I_j}{I_{(i)}}  \right] \left[\Lambda_{(i)} \chi^{(i)}_{\ (i)}\left(\frac{\partial^2 E^{(i)}}{\partial \Phi^2} \d \Phi \otimes \d \Phi + \sum_{j\neq i} \frac{\partial^2 E^{(i)}}{\partial E^j \partial \Phi} \d E^j \otimes \d \Phi \right) \right.\nonumber \\
					& \left. +\sum_{j\neq i} \Lambda_j \chi^j_{\ j} \left(\frac{\partial^2 E^{(i)}}{\partial \Phi \partial E^j} \d \Phi \otimes \d E^j  +\sum_{k\neq i} \frac{\partial^2 E^{(i)}}{\partial E^j \partial E^k} \d E^j \otimes \d E^k \right) \right].
	\end{align}
In this representation we have the analogous relations to \eqref{dI} 
	\begin{align} 
	\label{relationsIa}
	\d \left(\frac{1}{I_{(i)}}\right) &= \frac{\partial^2 E^{(i)}}{\partial \Phi^2} \d \Phi + \sum_{k\neq i}\frac{\partial^2 E^{(i)}}{\partial E^k \partial \Phi} \d E^k, \\
	\label{relationsIb}
	\d \left(-\frac{I_j}{I_{(i)}}\right) &= \frac{\partial^2 E^{(i)}}{\partial \Phi \partial E^j} \d \Phi + \sum_{k\neq i}\frac{\partial^2 E^{(i)}}{\partial E^k \partial E^j} \d E^k.
	\end{align}
Since $\Phi$ is a homogeneous function of order $\beta$, we have that
	\beq  
	\label{genGD}
	\beta \Phi = I_a E^a.
	\eeq
Using this result, together with equations \eqref{relationsIa}, \eqref{relationsIb} and the first law of thermodynamics, equation \eqref{firstlaw}, the expression for the induced metric \eqref{quevpullback2} becomes
	\begin{align}
	\label{gIEAbetaAux}
 	g^{E^{(i)}} &=  -\frac{1}{\beta I_{(i)}}  \left[ \xi^{(i)}_{\ (i)} E^{(i)}  +  \sum_{j\neq i}\left(\xi^{(i)}_{\ (i)} - \xi^j_{\ j}\beta \right) \frac{I_jE^j}{I_{(i)}}\right] \nonumber\\ 
 	&\times\left[ -\Lambda_{(i)} \chi^{(i)}_{\ (i)} \frac{1}{I_{(i)}} \d E^{(i)} \otimes \d I_{(i)} - \Lambda_{(i)} \chi^{(i)}_{\ (i)} \sum_{j\neq i} \frac{I_j}{I_{(i)}^2} \d E^j \otimes \d I_{(i)}  \right. \nonumber \\
&- \left. \sum_{j\neq i} \Lambda_j \chi^j_{\ j} \frac{1}{I_{(i)}} \d E^j \otimes \d I_j + \sum_{j\neq i} \Lambda_j \chi^j_{\ j} \frac{I_j}{I_{(i)}^2} \d E^j \otimes \d I_{(i)} \right],
	\end{align}
which can be factorised to
\begin{align}
	\label{gIEAbeta}
 	g^{E^{(i)}} &= -\frac{1}{\beta I_{(i)}}  \left[ \xi^{(i)}_{\ (i)} E^{(i)}  +  \sum_{j\neq i}\left(\xi^{(i)}_{\ (i)} - \xi^j_{\ j}\beta \right) \frac{I_jE^j}{I_{(i)}}\right]\nonumber \\ 
						& \times \left[ \sum_{k} \Lambda_k \chi^k_{\ c}  \d I_k \otimes \d E^c  + \sum_{j\neq i} \left(\Lambda_j \chi^j_{\ j} -  \Lambda_{(i)} \chi^{(i)}_{\ (i)}\right) \frac{I_j}{{I_{(i)}}^2} \d E^j \otimes \d I_{(i)} \right].
	\end{align}

It follows that the two metrics are conformally related only when the condition
	\beq
 	\label{cond1}
	\Lambda_{(i)} = \Lambda_j \ \frac{\chi^j_{\ j}}{\chi^{(i)}_{\ (i)}} \quad \text{no sum over }j \quad \forall j\neq i,
	\eeq
is satisfied. In such case, and using \eqref{gIPhi}, equation \eqref{gIEAbeta} reduces to
	\beq
	\label{gtotconf}
	g^{E^{(i)}} = -\frac{1}{\beta I_{(i)}}  \left[ \xi^{(i)}_{\ (i)} E^{(i)}  +  \sum_{j\neq i}\left(\xi^{(i)}_{\ (i)} - \xi^j_{\ j}\beta \right) \frac{I_jE^j}{I_{(i)}}\right] \left[\xi^a_{\ b} E^b I_a \right]^{-1} g^{\Phi}.
	\eeq
Hence, the induced metrics in the two representations are conformally related. 
	\end{proof}

In the case of the GTD programme, condition \eqref{cond1} together with $\Lambda_{(i)} = \Lambda_{j} = \Lambda$, yield the metric determined by $\chi^a_{\ b} = \delta^a_{\ b}$, namely $G_I$. Notice that, the same condition rules out the choice $\chi^a_{\ b} = \eta^a_{\ b}$, that is, $G_{II}$ does not lead to conformally related metrics in $\mathcal{E}$ for different representations.

 	\begin{prop}
	The induced metric is invariant under change of representation if the conformal factor is
	\beq
	\label{confactor}
	\Lambda(Z^A) = \frac{1}{\xi^a_{\ b} E^b I_a}\sum_{j\neq i}\frac{1}{ E^j I_j}
	\eeq
	\end{prop}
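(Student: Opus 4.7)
The objective is to show that, with the prescribed $\Lambda$, the metric $g^\Phi$ coincides on $\mathcal E$ with the induced metric obtained by the analogous construction in the $E^{(i)}$-representation (i.e.\ with $\Lambda$ replaced by $\Lambda'$ of the same functional form in primed variables). In the setting of the preceding proposition we take $\chi^a_{\ b}=\delta^a_{\ b}$ and $\Lambda_k=\Lambda$ for all $k$, so that $g^\Phi = (\xi^a_{\ b}E^b I_a)\,\Lambda\,\sum_k \d I_k\otimes \d E^k$. The strategy is a direct computation based on the transformation formulas \eqref{CR2} and the first law on $\mathcal E$.

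The prescribed $\Lambda$ causes an immediate cancellation of the Euler prefactor, reducing $g^\Phi$ to $\sum_{j\neq i}(E^j I_j)^{-1}\sum_k \d I_k\otimes \d E^k$. Next, using $E^{j'}=E^j$ and $I_{j'}=-I_j/I_{(i)}$ from \eqref{CR2}, the analogous scalar prefactor transforms as $\sum_{j'\neq i'}(E^{j'}I_{j'})^{-1} = -I_{(i)}\sum_{j\neq i}(E^j I_j)^{-1}$. The tensorial part requires more care: expanding $\d I_{(i)'}=\d(1/I_{(i)})$ and $\d I_{j'}=\d(-I_j/I_{(i)})$ by Leibniz, substituting $\d E^{(i)'}=\d\Phi$, and imposing the first law $\d\Phi = I_a\d E^a$ on $\mathcal E$, the two contributions proportional to $\d I_{(i)}\otimes \d E^j$ cancel pairwise and one obtains $\sum_{k'}\d I_{k'}\otimes \d E^{k'} = -I_{(i)}^{-1}\sum_k \d I_k\otimes \d E^k$ on $\mathcal E$.

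Combining these two identities, the factors $-I_{(i)}$ and $-I_{(i)}^{-1}$ cancel exactly, so the $E^{(i)}$-construction yields $\sum_{j\neq i}(E^j I_j)^{-1}\sum_k \d I_k\otimes \d E^k=g^\Phi$, which is the desired invariance. The main obstacle is the tensorial identity: one must be scrupulous in collecting the $\d I_{(i)}\otimes\d E^j$ contributions arising both from $\d I_{(i)'}\otimes I_j\,\d E^j$ (after first-law substitution in $\d E^{(i)'}=\d\Phi$) and from the $(I_j/I_{(i)}^2)\d I_{(i)}\otimes\d E^j$ piece inside $\d I_{j'}\otimes \d E^{j'}$, and verify their pairwise cancellation. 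Homogeneity of $\Phi$ plays no direct role at this stage beyond securing the applicability of \eqref{gtotconf} through the Euler relation.
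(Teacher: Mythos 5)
Your proposal is correct and follows essentially the same route as the paper's proof: substitute the prescribed $\Lambda$ so the Euler prefactor $\xi^a_{\ b}E^bI_a$ cancels, transform the scalar factor via \eqref{CR2} to pick up $-I_{(i)}$, and show via the first law that the tensorial part picks up the compensating factor $-I_{(i)}^{-1}$. The Leibniz expansion and pairwise cancellation of the $\d I_{(i)}\otimes \d E^j$ terms that you spell out is exactly the factorization the paper invokes ``analogously to \eqref{gIEAbeta}'', so the two arguments coincide in substance.
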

	
	\begin{proof}
	The metric $G^\Phi$ \eqref{cesarGII} with the choice \eqref{confactor} gives,
	\beq
	G^\Phi = \Theta \otimes \Theta + \sum_{j\neq i} \frac{1}{E^j I_j} \sum_{k,d} { \Lambda_k(Z^A)} \left(\chi^k_{\ d}\ \d E^d \otimes \d I_k \right),
	\eeq
whilst 
	\beq
	G^{E^{(i)}} = \frac{1}{{I_{(i)}}^{2}} \Theta \otimes \Theta + \sum_{j\neq i}\frac{-I_{(i)}}{E^jI_j} \left[\d \Phi \otimes \d \left(\frac{1}{I_{(i)}}\right) + \sum_{j\neq i} \d E^j \otimes \d \left(-\frac{I_j}{I_{(i)}}\right)\right].
	\eeq

The pulled-back metrics are
	\beq
	\label{invariantmetric}
	g^\Phi = \sum_{j\neq i}\frac{1}{E^jI_j} \d E^a \otimes \d I_a,
	\eeq
and analogously to \eqref{gIEAbeta} we can factorize $\varphi^*\left(G^{E^{(i)}}\right)$ to obtain,
	\beq
	g^{E^{(i)}} = \sum_{j\neq i} \frac{-I_{(i)}}{E^jI_j} \left( -\frac{1}{I_{(i)}} \d E^a \otimes \d I_a \right).
	\eeq

It follows immediately that
	\beq
	g^\Phi = g^{E^{(i)}}.
	\eeq

	\end{proof}

Note that the $\Lambda$-function is related to the primed coordinates through
	\beq
	\Lambda\left[Z^{A'}\left(Z^A\right)\right] = \frac{-\beta I_{(i)}}{\left[ \xi^{(i)}_{\ (i)} E^{(i)}  +  \sum_{j\neq i}\left(\xi^{(i)}_{\ (i)} - \xi^j_{\ j}\beta \right) \frac{I_jE^j}{I_{(i)}}\right]}\ \sum_{j\neq i}\frac{1}{ E^j I_j}
	\eeq
	
	\begin{corolary}
	$\Lambda(Z^A)$ is invariant under total Legendre transformations.
	\end{corolary}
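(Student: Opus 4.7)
The plan is to verify the statement by a direct computation exploiting the explicit action of a total Legendre transformation (TLT) on Darboux coordinates. From the partial transformations \eqref{LT.1}--\eqref{LT.2} applied simultaneously to every index, a TLT acts as $\tilde{E}^a = -I_a$ and $\tilde{I}_a = E^a$. Consequently, each diagonal product transforms as $\tilde{E}^a \tilde{I}_a = (-I_a)(E^a) = -E^a I_a$ (no sum over $a$).

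First, I would apply this rule to the numerator factor $\sum_{j\neq i} 1/(E^j I_j)$ appearing in \eqref{confactor}. Since each summand $E^j I_j$ picks up a single sign flip, every term $1/(E^j I_j)$ is multiplied by $-1$, and therefore the whole sum inherits an overall factor of $-1$.

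Next, I would examine the denominator $\xi^a_{\ b} E^b I_a$. Because $\xi^a_{\ b}$ is a constant diagonal matrix, this equals $\sum_a \xi^a_{\ a} E^a I_a$, and again each term is flipped in sign by the TLT, so the denominator as a whole is multiplied by $-1$.

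Finally, combining these observations, the two minus signs cancel in the ratio defining $\Lambda$, yielding $\tilde{\Lambda}(Z^A) = \Lambda(Z^A)$. The only subtlety to spell out is that the index $i$ excluded from the sum is not a dynamical coordinate but a fixed label intrinsic to the choice of the representative $\Theta_{(i)}$ used in the change of representation; it is left untouched by the TLT, which only acts on the conjugate pairs. No genuine obstacle arises, since the argument reduces to tracking a pair of sign changes and noting the structural invariance of the set over which the sum is taken.
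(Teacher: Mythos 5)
Your proposal is correct and follows essentially the same route as the paper's own proof: both track the sign flip $\tilde{E}^a\tilde{I}_a = -E^a I_a$ (no sum) term by term in the sum $\sum_{j\neq i} 1/(E^jI_j)$ and in the denominator $\xi^a_{\ b}E^bI_a$, and conclude that the two overall minus signs cancel. Your added remark that the excluded index $i$ is a fixed label unaffected by the transformation is a small clarification beyond what the paper states explicitly, but the argument is the same.
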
	
	\begin{proof}
	Using \eqref{LT.1}-\eqref{LT.2} in \eqref{confactor} for every pair of indexes we obtain
	\beq
	\tilde{\Lambda} = \frac{-1}{\xi^a_{\ b}\ \tilde{E}^b \tilde{I}_a} \sum_{j\neq i} \frac{-1}{\tilde{E^j}\tilde{I_j}} = \frac{1}{\xi^a_{\ b}(\tilde{E}^b \tilde{I}_a)} \sum_{j\neq i} \frac{1}{\tilde{E^j}\tilde{I_j}}=\frac{1}{\xi^a_{\ b} E^b I_a}\sum_{j\neq i}\frac{1}{ E^j I_j} = \Lambda.
	\eeq

	\end{proof}
	
Therefore, we have obtained a metric which is invariant under total Legendre transformations and whose associated curvature in the space of equilibrium states does not depend upon the chosen fundamental representation, provided it is a homogeneous function. In the canonical representation it is written as
	\beq
	\label{gnatural}
	G^\natural_{\rm T} = \Theta \otimes \Theta + \sum_{j\neq i}\frac{1}{{E^j} I_{j}} \d E^a \otimes \d I_a.  
	\eeq
	
Now, let us show that invariance under change of representation is not compatible with $G_{\rm P}$. To this end, let us write the metric \eqref{quevedoGIII} in the $E^{(i)}$ representation, that is
	\beq
	G_{\rm P}^{E^{(i)}} = \frac{1}{{I_{(i)}}^2} \Theta \otimes \Theta + \Lambda \left[\left(\frac{\Phi}{I_{(i)}} \right)^{2k+1} \d \Phi \otimes \d \left(\frac{1}{I_{(i)}} \right) + \sum_{j \neq i} \left(\frac{E^j I_j}{I_{(i)}} \right)^{2k+1} \d E^j \otimes \d \left(-\frac{I_j}{I_{(i)}} \right)\right]. 
	\eeq
	Thus, we can prove the following
	\begin{prop}
	Let $\Phi = \Phi(E^a)$ be a homogeneous function of order $\beta$,  $G^{\Phi}_{\rm P}$ and $G^{E^{(i)}}_{\rm P}$ the metric \eqref{quevedoGIII} in the canonical and $E^{(i)}$ representations, respectively. Then, the induced metrics $g_{\rm P}^{\Phi} = \varphi^{*}\left(G^{\Phi}_{\rm P}\right)$ and $g_{\rm P}^{E^{(i)}} = \varphi^*\left( G^{E^{(i)}}_{\rm P}\right)$ cannot be conformally related.    
	\end{prop}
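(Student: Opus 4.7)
The plan is to mirror the strategy used in the proof of the preceding proposition. I would compute the pullback $g_{\rm P}^{E^{(i)}} = \varphi^{*}(G_{\rm P}^{E^{(i)}})$ by expanding the differentials $\d(1/I_{(i)})$ and $\d(-I_j/I_{(i)})$, using $\d \Phi = I_a\, \d E^a$ on $\mathcal{E}$ together with the homogeneity relation $\beta \Phi = I_a E^a$, and then collect the contributions to each basis element $\d E^a \otimes \d I_b$. The goal is to attempt to factor the resulting expression as a single scalar function $\Omega(Z^A)$ times $g_{\rm P}^{\Phi} = \Lambda \sum_a (E^a I_a)^{2k+1}\, \d E^a \otimes \d I_a$ and to argue that no such $\Omega$ can exist.

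The obstruction becomes visible already at the level of the diagonal coefficients. The term $(E^j I_j/I_{(i)})^{2k+1}\, \d E^j \otimes \d(-I_j/I_{(i)})$ in $G_{\rm P}^{E^{(i)}}$ contributes $-\Lambda (E^j I_j)^{2k+1}/I_{(i)}^{2k+2}$ to $\d E^j \otimes \d I_j$ for $j\neq i$, whereas the term $(\Phi/I_{(i)})^{2k+1}\, \d \Phi \otimes \d(1/I_{(i)})$ contributes $-\Lambda \Phi^{2k+1}/I_{(i)}^{2k+2}$ to $\d E^{(i)} \otimes \d I_{(i)}$. Comparing these with the corresponding coefficients in $g_{\rm P}^{\Phi}$, a hypothetical conformal factor would have to satisfy simultaneously $\Omega = -I_{(i)}^{-(2k+2)}$ (from the diagonals with $j\neq i$) and $\Omega = -\Phi^{2k+1}/[(E^{(i)})^{2k+1}\, I_{(i)}^{4k+3}]$ (from the $(i)$ diagonal). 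Equating these two expressions forces $(E^{(i)}I_{(i)})^{2k+1} = \Phi^{2k+1}$, i.e., $E^{(i)}I_{(i)} = \Phi$, which combined with $\beta\Phi = I_a E^a$ is an algebraic identity that fails for a generic homogeneous fundamental relation.

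The main nuisance I anticipate is the bookkeeping associated with the off-diagonal cross terms $\d E^j \otimes \d I_{(i)}$ that arise in $g_{\rm P}^{E^{(i)}}$ but are absent from $g_{\rm P}^{\Phi}$; these furnish an independent obstruction, since their vanishing would demand $(E^j I_j)^{2k+1} = \Phi^{2k+1}$ for every $j \neq i$, an even stronger constraint. Unlike the single overall prefactor $\xi^a_{\ b}E^bI_a$ present in $G^{\Phi}$, which allowed the uniform absorption used in the preceding proof, the index-dependent weight $(E^aI_a)^{2k+1}$ sitting inside the sum that defines $G_{\rm P}$ cannot be absorbed into a single scalar conformal factor, yielding the stated impossibility.
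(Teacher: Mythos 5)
Your proposal is correct and follows essentially the same route as the paper: pull back $G_{\rm P}^{E^{(i)}}$, use $\d \Phi = I_a\,\d E^a$ together with the Euler identity $\beta\Phi = I_a E^a$, and observe that the cross terms $\d E^j \otimes \d I_{(i)}$ and the mismatch between the $(i)$-diagonal weight $\Phi^{2k+1}$ and $(E^{(i)}I_{(i)})^{2k+1}$ obstruct any single conformal factor, which is exactly the content of the paper's factorised expression. Only note that your step ``$(E^{(i)}I_{(i)})^{2k+1}=\Phi^{2k+1}$, i.e.\ $E^{(i)}I_{(i)}=\Phi$'' tacitly uses $2k+1\neq 0$; that exceptional value is precisely the $k=-1/2$ case the paper isolates and discards as inconsistent with the (integer-$k$) partial Legendre invariance of $G_{\rm P}$, and it is the Hessian case shown later to be genuinely conformal.
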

	\begin{proof}
	The induced metrics are
	\begin{align}
	\label{gpi}
	g_{\rm P}^{\Phi}    & = \Lambda \sum_{i=1}^n \left(E^i \frac{\partial \Phi}{\partial E^i} \right)^{2k+1} \d E^{i} \otimes \d I_{i} \quad \text{and}\\
	\label{gpei}
	g_{\rm P}^{E^{(i)}} & = \Lambda \left[\left(\frac{\Phi}{I_{(i)}} \right)^{2k+1}\  \d \Phi \otimes \d \left(\frac{1}{I_{(i)}}\right) + \sum_{j \neq i}\ \left(\frac{E^j I_j}{I_{(i)}} \right)^{2k+1} \d E^j \otimes \d \left(-\frac{I_j}{I_{(i)}} \right) \right],
	\end{align}
	where the differentials of the intensive variables are the same as in \eqref{dI}, \eqref{relationsIa} and \eqref{relationsIb}.
	
	Using the generalised Euler identity \eqref{genGD}, we can rewrite \eqref{gpei} as
	\begin{align}
	\label{geiinter}
	g^{E^{(i)}}_{\rm P} = & \Lambda \left[\frac{-1}{{I_{(i)}}^{2}}\left(\frac{E^a I_a}{\beta I_{(i)}} \right)^{2k+1} \d \Phi \otimes \d I_{(i)}  + \sum_{j\neq i} \left(\frac{E^j I_j}{I_{(i)}} \right)^{2k+1} \d E^j \otimes \left(-\frac{1}{I_{(i)}} \d I_j + \frac{I_j}{{I_{(i)}}^2}\d I_{(i)} \right)\right],
	\end{align}
	and substituting the first law \eqref{firstlaw}, we can factorise the expression above to obtain
	\begin{align}
	\label{geifinal}
	g^{E^{(i)}}_{\rm P} = & -\Lambda \left[\left(\frac{E^a I_a}{\beta I_{(i)}} \right)^{2k+1}  \sum_{j\neq i} \frac{I_j}{{I_{(i)}}^2} \left[1 - \left(\beta \frac{E^j I_j}{E^a I_a}\right)^{2k+1} \right] \d E^j \otimes \d I_{(i)}\right.\nonumber\\
						  & \left. + \frac{1}{I_{(i)}}\left(\frac{E^a I_a}{\beta I_{(i)}} \right)^{2k+1} \d E^{(i)} \otimes \d I_{(i)} + \frac{1}{I_{(i)}}\sum_{j\neq i} \left(\frac{E^j I_j}{I_{(i)}} \right)^{2k+1} \d E^j \otimes d I_j \right]. 
	\end{align}
	
	The only possibility of making \eqref{geifinal} conformal to \eqref{gpi} is that $k=-1/2$, which is inconsistent with  the partial Legendre invariance of $G_{\rm P}$.
	\end{proof}

\subsection{Examples: homogeneous systems with two degrees of freedom}

In the simplest situation, when the fundamental relation is homogeneous of order one, i.e. when $\beta =1$ {and $\xi^{(i)}_{\ (i)} = \xi^j_{\ j}=1$ for any $i\neq j$}, the metric $g_{\rm T}^{E^{(i)}}$, equation \eqref{gtotconf}, reduces to
	\beq
	\label{gIEA}
 	g^{E^{(i)}} = -\left[I^{-1}_{(i)} E^{(i)} 
        \frac{1}{ I_a E^a}\right] \ g^{\Phi}.
	\eeq
Now, let us consider a system with two degrees of freedom. The two representations that are commonly used are those of the  energy and entropy.  Let us take $\Phi = U(S,V)$ and $E^{(i)} = S(U,V)$. In this case, the induced metrics are conformally related as
	\beq
	\label{gIconf}
 	g^{S} = -\left[T^{-1} S\left(\frac{1}{ST -PV}\right)\right] \ g^{U}.
	\eeq
It is clear that these two conformally related metrics do not produce the same curvature. Thus, we will not obtain the same thermodynamic information whenever we make a change of representation.

Note that if we work instead with the metric (\ref{invariantmetric})
	\beq
	g^\natural = \varphi^*\left(G^\natural \right) = \sum_{j \neq i} \frac{1}{E^j I_j} \d E^a \otimes \d I_a ,
	\eeq
we obtain 
	\beq
	{g^{U}_{\rm T}}^\natural = -\frac{1}{P V} \left( \d S \otimes \d T - \d V \otimes \d {P}\right) = {g_{\rm T}^S}^\natural
	\eeq
and, therefore, the curvature scalar is the same in both representations, i.e., the change of representation is an isometry for such a metric.

Finally, let us consider the manifestly not Legendre invariant case of $G_{\rm P}$ with $\Lambda = 1$ and $k=-1/2$, whose pullback generates Hessian metrics for the equilibrium space.
In this case, the metrics $g_{\rm P}^U$ and $g_{\rm P}^S$ -- equations \eqref{gpi} and \eqref{geifinal}, respectively -- are conformally related through \cite{mrugala1,mrugala2}
	\beq
	g_{\rm P}^{S} = -\frac{1}{T} g_{\rm P}^{U}.
	\eeq
This case corresponds to the Hessian metric with the entropy as thermodynamic potential which was originally proposed by Rao \cite{rao45}.

This simple exercise shows that Hessian metrics not only fail to be Legendre invariant, but they also  give different curvatures in each representation.

\section{Conclusions.}\label{conclusions}

In this paper, we have analysed in detail the invariance properties of the GTD programme. As it has been previously argued, Legendre invariance is paramount in preserving the notion that the physical reality should be independent of the thermodynamic potential used to describe it. Within the GTD programme, analogously to field theories, curvature is the geometric object accounting for such reality. The metrics $G_{\rm T}$ and $G_{\rm P}$ -- equations \eqref{quevedoGII} and \eqref{quevedoGIII}, respectively -- satisfy the desired invariance in the thermodynamic phase-space $\mathcal{T}$ and, therefore, produce the same curvature for the space of equilibrium states $\mathcal{E}$ independently of the thermodynamic potential used. However, Legendre invariance alone is not sufficient to guarantee a unique description of a thermodynamic system in terms of its curvature, i.e. we also need to demand invariance of the curvature under a change of fundamental representation. Such a problem has remained largely unanalysed.

We have shown that the only metric compatible with the invariance under change of representation in the GTD programme is $G_{\rm T}$ with $\Lambda$ given by \eqref{confactor}, that is $G^\natural_{\rm T}$ [c.f. equation \eqref{gnatural}]. This metric, in turn, is a particular case of the metric family $G_{I}$ which has been used in GTD to describe thermodynamic systems with first-order phase transitions.
Consequently, the metric $G^\natural_{\rm T}$ cannot be applied to systems with second-order phase transitions which are described with the metric $G_{II}$.
It is also important to stress the fact that the results presented in this work only apply to systems which can be described by homogeneous functions. We have also shown that invariance under partial Legendre transformations cannot be preserved if we demand representation invariance for $G_{\rm P}$. 

Finally, we have applied our results to a generic homogeneous system with two degrees of freedom using the energy and entropy representations. We observe that indeed, $G^\natural_{\rm T}$ is the only of the metrics which gives an invariant curvature in both representations, whereas $G_{\rm P}$ with $\Lambda = 1$ and the Legendre invariance violating condition $k = -1/2$ reduces to a particular Hessian metric in which the entropy is used as thermodynamic potential. 

All the metrics found so far in GTD, using only the Legendre invariance condition, contain the arbitrary conformal factor $\Lambda(Z^A)$. 
This is an additional degree of freedom that can be used to reach diverse objectives. For instance, in the study of the thermodynamics of black holes \cite{tqs12}, we found that the curvature singularities determine the phase transition structure and, in addition, $\Lambda$ can be chosen in such a way that the limiting case of extremal black holes corresponds to curvature singularities too. Here, we have found that $\Lambda$ can also be used to reach representation invariance directly from the phase space. We believe that the conformal freedom that follows from the phase 
space still might have more applications at the level of the equilibrium space.

This work will serve as a solid footing to explore further possibilities in our quest to obtain a completely invariant description of thermodynamics. It cannot be over estimated that the results presented here depend heavily on the homogeneity of the fundamental relations describing physical systems.

\section*{Acknowledgements}

We would like to thank the members of the GTD-group for fruitful comments and discussions.
The work of AB was supported by an ICRANet fellowship. CSLM is thankful to CONACYT, Grant No. 290679\_UNAM. FN acknowledges support from DGAPA-UNAM, and HQ wishes to thank CONACYT, Grant No. 166391.

\end{document}